\begin{document}

\title{Algorithms for Noisy Broadcast under Erasures}
\author{Ofer Grossman \footnotemark[1]
\and Bernhard Haeupler \footnotemark[2]
 \and Sidhanth Mohanty \footnotemark[3]}
\footnotetext[1]{EECS Department, MIT, Cambridge, MA, USA. \texttt{ofer.grossman@gmail.com}.
Supported by the Simons Grant -- Simons Foundation, ``Investigation Award -- Goldwasser'' 08/01/2012 -- 12/31/2017, NSF Grant -- CNS-1413920, and a Hertz Foundation Fellowship.}
\footnotetext[2]{Computer Science Department, Carnegie Mellon University, Pittsburgh, PA, USA. \texttt{haeupler@cs.cmu.edu}. 
Supported in part by NSF grants CCF-1527110, CCF-1618280 and NSF CAREER award CCF-1750808.}
\footnotetext[3]{Computer Science Department, Carnegie Mellon University, Pittsburgh, PA, USA. \texttt{sidhanthm96@gmail.com}}
\maketitle

\pagenumbering{gobble}

\begin{abstract}
The noisy broadcast model was first studied in [Gallager, TranInf'88]
where an $n$-character input is distributed among $n$ processors, so that each processor receives one input bit. Computation proceeds in rounds, where in each round each processor broadcasts a single character, and each reception is corrupted
independently at random with some probability $p$. [Gallager, TranInf'88]
gave an algorithm for all processors to learn the input in $O(\log\log n)$ rounds with high probability.
Later, a matching lower bound of $\Omega(\log\log n)$ was given in [Goyal, Kindler, Saks; 
SICOMP'08].

We study a relaxed version of this model where each reception is erased and replaced
with a `?' independently with probability $p$. In this relaxed model, we break past the lower bound of [Goyal, Kindler, Saks; 
SICOMP'08] 
and obtain an $O(\log^* n)$-round algorithm for all processors to learn
the input with high probability. We also show an $O(1)$-round algorithm for the same problem when the
alphabet size is $\Omega(\mathrm{poly}(n))$.
\end{abstract}
\newpage
\pagenumbering{arabic}

\section{Introduction}
In recent years, it is becoming increasingly common for computational tasks to be
performed by multiple processors in a distributed fashion. The communication channels of these networks may have imperfections,
which introduces noise to the system.

A formal version of a noise model was proposed by \cite{el1987open}: There are $n$ processors: $1,2,\ldots, n$ and each processor is given
a bit. In each round, every processor broadcasts a bit to all other processors. Every processor
will receive the correct message with some probability, and may receive a different (corrupted) message independently with probability
$p < 1/2$ (i.e., each reception gets corrupted with probability $p$). The goal is for the processors to collectively compute the {XOR} of all
their inputs. An algorithm that takes $O(\log\log n)$ rounds for all processors to learn
the full input (and hence the XOR as well) was found by \cite{gallager1988finding}. A matching lower bound of
$\Omega(\log\log n)$ rounds was proven by \cite{goyal2008lower}.

All of the prior works were concerned with \textit{substitution errors}. In this paper, we study such networks in the presence of \textit{erasure errors}, where instead of messages getting corrupted into other messages, instead messages may get dropped. Specifically, we study the following model: in a single round each processor
can broadcast a single bit $b$ to all other
processors. For each ordered pair $(i,j)$, independently with some probability $p$, the character
that $i$ transmitted is not received by $j$ and a `?' is received instead. In other words,
there is a string $X\in\{0,1\}^n$ and processor $i$ is given the $i$th bit of $X$, called
$x_i$, and the goal is for each processor to learn $X$ using as few rounds of communication as possible. We call our noise model the \textit{erasure model}.

\subsection{Our results}
We show that for any alphabet, each processor can learn the inputs of all other processors with
high probability within $O(\log^* n)$ rounds. At the high level, the algorithm involves recursively
running the protocol on groups of size $\log n$, and having each group encode
its input using a constant rate and constant relative distance
error correcting code. Then, the group collectively transmits this encoded string within a constant number of rounds. It can be shown
that with high probability every processor receives enough bits to decode the group's input. There are groups
for which not enough processors learn the full string (i.e., the recursive call fails), and some technical steps are
needed to handle these `failed groups'. The protocol is described in full
detail in section \ref{sec:mainres}.

We note that in the presence of \textit{substitution errors}, it was proven in \cite{goyal2008lower} that $\Omega(\log \log n)$ rounds are required for all processors to learn the whole input. Since we show a $O(\log^*(n))$ algorithm for the problem in the presence of \textit{erasure errors}, this shows a fundamental difference between substitution errors and erasure errors in the broadcast model.

We then show that when the alphabet is of polynomial size, there is an $O(1)$ round
algorithm for every processor to learn the full input. The algorithm involves
treating the alphabets as elements of a finite field $\mathbb{F}_q$, and simulating
multiplying the input vector with an appropriate random matrix. Then , the processors receive a random system of linear equations which one can show has a
unique solution with high probability.

We then show that any symmetric function of the input
can be computed within a constant number of rounds via
computing the Hamming weight.

\subsection{Related Work}
A related problem was studied in \cite{gallager1988finding} where the broadcast
model assumed was sequential, where in one round only one processor can broadcast a bit. Additionally, the noise model assumed was that of bit flips instead of
erasures. That is, each transmitted bit is independently flipped with
probability $p$ on the receiving end. In their model, \cite{gallager1988finding} shows that all
the processors can learn the entire input within $O(\log\log n)$ rounds. However,
it left open the question of whether a faster protocol was possible.

The model of \cite{gallager1988finding} was studied further in \cite{goyal2008lower} where a lower bound of
$\Omega(n\log\log n)$ was proven for the total number of broadcasts, thereby establishing
that Gallager's protocol is optimal up to constant factors. The lower bound is proved via a reduction to
another model called the generalized noisy decision tree, which is a variant of the
noisy decision tree model introduced in \cite{feige1994computing}.
\cite{goyal2008lower} also studies whether more efficient protocols exist when
the processors only want to compute some specific function on the entire input and
shows that the Hamming weight can be computed with constant probability within $O(n)$
broadcasts.

We note that it follows from the lower bound in \cite{goyal2008lower}
that in a variant of our model where one considers substitution errors instead of erasure errors, any
protocol from which all the processors learn the entire input must take
$\Omega(\log\log n)$ rounds. In light of this lower bound, our result of an
$O(\log^* n)$ protocol is interesting, as it shows a fundamental difference between substitution and erasure errors in this broadcast model.

Recently, a work by Efremenko, Kol, and Saxena \cite{efrem} showed that under a model where the processors can adaptively choose which processor will speak in each round, the lower bound of \cite{goyal2008lower} breaks down.

Note that the work of Gallager \cite{gallager1988finding} shows that in the
substitution model where a single processor broadcasts to the rest in a round, any
function can be computed within $O(n\log\log n)$ rounds. A work by Kushilevitz and
Mansour \cite{kushilevitz1998computation} studies the question of which Boolean functions can be computed within $O(n\log\log n)$
broadcasts. They determine that threshold functions can be computed with constant probability within $O(n)$ broadcasts.

A paper by Feige and Killian \cite{feige2000finding} studied a harsher noise
model than \cite{gallager1988finding}, where an adversary can arbitrarily
`uncorrupt' arbitrary corrupted bits, causing the noise to lose structure. In this harsher model, they show an $O(\log^*n)$ round protocol to compute
the \textsf{OR} of all input bits. Newman \cite{newman2004computing} studies
another noise model where each bit transmitted is independently flipped with
an unknown probability that is at most $p$ and gives algorithms that use $O(n)$
broadcasts and $O(\log^* n)$ rounds for certain classes of Boolean functions,
including \textsf{OR}, \textsf{AND}, and functions with linear size $\mathsf{AC_0}$
formulas.

In \cite{alon}, the authors show efficient protocols to handle errors in the UCAST model, in which instead of broadcasting bits, a processor can send a different message to each other processor. They also show efficient protocols to handle errors when the communication network has certain expansion properties. For general graphs of low degree, a protocol for handling errors was found in \cite{raj}, which was later shown to be optimal in \cite{braver}.

Our model in the absence of errors is known as the Broadcast Congested Clique, which is a computational model often studied in distributed computing (see for example, \cite{bcast1, bcast2, bcast3,bcast4,bcast5}. In this model, $n$ processors each get a piece of the input, and they work together to compute some function of this shared input. Computation proceeds in rounds, where in each round each processor can broadcast a short message to all other processors. Our work can be interpreted as showing that when using messages of constant size, every protocol in the Broadcast Congested Clique can be made resilient to erasure errors with a blowup of only $O(\log^*(n))$. In the case where messages are of logarithmic size, we show the Broadcast Congested Clique can be made resilient to erasure errors with only a constant blowup.


\subsection{Notation and conventions}\label{sec:prelim}

In this section, we state some notational conventions we use. First, we describe the computational model (without erasures), and then we formally define the model we consider with erasures.

\textbf{The Computational Model:} In a setting with $n$ processors, each processor is identified with a distinct
number in $[n]$. Given a string $X$, which we denote using an upper case
character, we write the $i$th bit as $x_i$, using the corresponding lower case character. To denote 
the substring of $X$ starting at position $i$ and ending in position $j$ we write $X_{[i,j]}$.
When we wish to compute some function of a $n$-bit string $X$ using $n$ processors, assume
$x_i$ is provided as input to processor $i$.
In the description of algorithms, $\textsc{Algo}(x_1,\ldots,x_n)$ refers to an algorithm
that runs on $n$ processors where the $i$th processor is given $x_i$ as input.

In all our algorithms, we assume that each broadcast is repeated $\gamma$ times where $\gamma$
is some appropriately chosen constant.

Formally, we have:
\begin{definition}
We let the \textit{noisy parallel broadcast model} be a model of computation where there are $n$ processors $P_1, \ldots, P_n$, and $P_i$ receives input bit $x_i$. In each round of computation, each processor can broadcast one bit to all other processors. Each reception is corrupted with some constant probability $0\le p < 1$, in which case the character `?' is received instead of the bit which was sent.
\end{definition}

In this paper, we study the complexity of computing certain functions in the above model. Specifically, for constant erasure probability $p$ we show a bound of $O(\log^*(n))$ for computing any function, and a bound of $O(1)$ for symmetric functions.

As part of our algorithm we use error correcting codes, so we include standard results and notations for codes below:
\textbf{Error Correcting Codes:} An error correcting code is described by functions $\mathsf{Enc}:\{0,1\}^k
\rightarrow\{0,1\}^{n}$ and $\mathsf{Dec}:\{0,1\}^{n}\rightarrow
\{0,1\}^k$.

The \textit{rate} of an error correcting code is defined as $\frac{n}{k}$
and the \textit{relative distance} is defined as $\frac{\min_{x,y\in\mathcal{C}}
d(x,y)}{n}$. The quantity $\frac{d(x,y)}{2}$ is referred to as the \textit{decoding
radius}. The decoding function $\mathsf{Dec}:\{0,1\}^{n}\rightarrow
\{0,1\}^k$ satisfies the property that $\mathsf{Dec}(c') = y$ for any $c'$ within hamming distance $\frac{d(x,y)}{2}$ (i.e., the decoding radius) from $\mathsf{Enc}(x)$.

We use the result of \cite{justesen1972class} that error correcting code
families of constant rate and constant relative distance exist.
In particular, for the sake of this paper, we assume the existence of an error
correcting code family $E$ with relative distance $0.25$ and rate some absolute
constant $K$.

\section{An $O(\log^* n)$ algorithm for computing any function}\label{sec:mainres}

We consider the following message-passing model. There are $n$ processors, and in each round, every processor transmits a single bit $b$
to all other processors. Each processor receives each bit independently and at random with probability $1-p$. With probability $p$, the 
character `?' is received instead. If each processor starts with a single input bit, we ask how many rounds are required so that every 
processor knows all input bits with high probability. We show a bound of $O(\log^*(n))$ for this problem. Specifically, we will show:

\begin{theorem}\label{main}
For every $0\le p<1$, there is an algorithm in the noisy broadcast parallel erasure model that computes $\text{ID}_n$ with high probability within $O(\log^*(n) \log \frac{1}{1-p})$ rounds.
\end{theorem}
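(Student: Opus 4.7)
The plan is to design a recursive algorithm satisfying $T(m) = T(\log m) + O(\log \frac{1}{1-p})$, which unfolds to $T(n) = O(\log^* n \cdot \log \frac{1}{1-p})$. For the base case, once $m$ is below some constant, every processor broadcasts its own bit with $\Theta(\log \frac{1}{1-p})$ repetitions so that a union bound over pairs guarantees every bit reaches every processor. For the inductive step, I would partition the $m$ processors into $m/\log m$ groups of size $\log m$, and recursively invoke the protocol inside each group in parallel.

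After the recursive call, each group collectively knows its own $\log m$-bit input $Y_g$. The group then encodes $Y_g$ using the constant-rate, relative-distance-$0.25$ code $E$ from the preliminaries; the resulting codeword $C_g$ has length $O(\log m)$. Since there are $\log m$ members in the group, each member is assigned only $O(1)$ positions of $C_g$ to broadcast. All groups perform this broadcast in parallel over $O(\log \frac{1}{1-p})$ rounds of suitable repetition, so the effective per-bit erasure probability can be made a small constant. By Chernoff and a union bound over receivers and groups, every receiver gets at least a $0.9$-fraction of each $C_g$, so decoding from erasures succeeds, recovering every $Y_g$ and thus $X$.

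The main technical obstacle is that the recursive calls themselves are noisy: within a group, some members may fail to learn $Y_g$, and their subsequent broadcasts will look like substitution errors, not erasures, to the decoder. I would strengthen the inductive invariant from ``every processor learns the input'' to ``at most a $\delta$-fraction of processors fail to learn the input, with probability $1 - 1/m^c$,'' for an appropriately small constant $\delta$. Since the relative-distance-$0.25$ code can handle any combination of $e$ substitutions and $s$ erasures with $2e + s < 0.25 \cdot |C|$, choosing $\delta < 0.1$ (say) leaves enough slack to accommodate both the $\delta$-fraction of erroneous transmissions and the constant-fraction channel erasures.

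A remaining subtlety is that a small number of groups may still violate the $\delta$-fraction bound. I would handle these by appending a short cleanup phase in which already-informed processors collectively re-broadcast an encoding (again via $E$) of the still-missing groups' inputs; the number of such groups is $o(n/\log n)$ with high probability, so this retransmission fits within $O(\log \frac{1}{1-p})$ extra rounds at the top level. The most delicate step is choosing $\delta$, the repetition constant, and $c$ consistently across the $\log^* n$ levels so that the inductive invariant propagates and the final overall success probability is genuinely high.
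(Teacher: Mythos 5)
Your high-level plan matches the paper's: the same recursion on groups of size $\log m$, the same recurrence $T(m)=T(\log m)+O(1)$, and the same use of a constant-rate, distance-$0.25$ code to spread each group's recovered string across its members. Where you diverge is in the inductive invariant and in how you handle groups whose recursive call goes badly wrong, and this is where there is a genuine gap.

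Your weakened invariant (``at most a $\delta$-fraction of processors fail'') does not survive the step where a group is ``bad,'' i.e., has more than a $\delta$-fraction misinformed. For such a group, the broadcast codeword carries more than a $\delta$-fraction of adversarial substitutions, so every receiver outside the group fails to decode it and is misinformed about that group's $\log m$ bits. One bad group is therefore enough to make \emph{all} $m$ processors misinformed about the level-$m$ input, so the $\delta$-fraction bound cannot propagate without a cleanup step that genuinely repairs bad groups. Your proposed cleanup --- ``already-informed processors re-broadcast an encoding of the still-missing groups' inputs'' --- does not specify who these processors are or how they know which groups are missing. The only processors who know a bad group's $\log m$-bit input are inside that group, and by assumption more than a $\delta$-fraction of them are wrong, so re-broadcasting from within the group just reproduces the same problem. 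Moreover, without the paper's verification mechanism, a misinformed processor has no way to know it is misinformed, so it cannot ``stay silent'' and its re-broadcast looks like an arbitrary substitution to everyone else.

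The paper sidesteps exactly this issue with two moves you have not reproduced. First, it arranges for each recursive call to either succeed or \emph{fail with knowledge} (every member of the group outputs $\bot$), using a constant-round \textsc{EqualityTest} built on the \textsf{AND} primitive; each processor then announces a binary success bit, so all processors agree (w.h.p.) on precisely which groups failed. Second, the repair for a failed group does not attempt to transmit the group's $\log m$-bit recursive output at all: instead, every failed-group member re-broadcasts its raw one-bit input $x_j$, and a block of $\Theta(n/z)$ designated ``helper'' processors from successful groups relays $x_j$ to everyone. This works precisely because the one-bit inputs are always known to their owners regardless of how badly the recursion failed, whereas your scheme requires someone to correctly know a $\log m$-bit string that, for bad groups, nobody reliably has. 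Your joint substitution-and-erasure decoding idea for good groups ($2e+s<d$) is a clean alternative to the paper's per-receiver decoding test and would work, but to close the proof you need to add a failure-detection mechanism (something like the \textsc{EqualityTest}) and a relay scheme that falls back to the raw single-bit inputs rather than the groups' recursive outputs.
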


Without loss of generality, we assume that $p\leq 0.01$, since for any erasure probability  $p < 1$, repeating each message $O(\log\frac{1}{1-p})$ times can be used to
effectively lower the probability of receiving `?'. We describe our algorithm for the case where the alphabet $\Sigma=\{0, 1\}$.
The protocol generalizes to larger alphabets in a straightforward manner.

We describe a protocol for $n$ processors with the guarantees: at the end of
the protocol, all $n$ processors can output the full string $C$ with probability at least
$1-\frac{1}{n^5}$, and if the protocol fails (that is, there is
some processor who cannot output the full string $C$), then all $n$ processors
can output `$\bot$' with probability at least $1-\frac{1}{2^{7n}}$. For the rest of this section,
we assume $n\geq n_0$ for a sufficiently large $n_0$.

We begin by describing algorithms for simpler subproblems.

\begin{lemma}
\label{lem:andconst}
Let $b_i$ be the input to processor $i$, and let the erasure probability $p$ be $.01$. Then there is an $O(1)$-round algorithm
and an absolute constant $\alpha$ such that all processors output the \textsf{AND}
of all $b_i$ with probability at least $1-2^{-\alpha n}$.
\end{lemma}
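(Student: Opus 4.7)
The plan is to use a two-round zero-propagation protocol. In round 1, each processor $j$ broadcasts its input bit $b_j$. In round 2, each processor $j$ broadcasts $0$ if either $b_j = 0$ or it received at least one $0$ during round 1, and broadcasts $1$ otherwise. Each processor outputs $0$ if it ever received a $0$ across the two rounds, and $1$ otherwise. If the true $\textsf{AND}$ equals $1$ then every $b_j = 1$, no processor ever transmits or receives a $0$, and every processor correctly outputs $1$ with probability $1$; so only the case where the $\textsf{AND}$ is $0$ requires analysis.

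Suppose the $\textsf{AND}$ is $0$, and let $k \ge 1$ be the number of processors with $b_i = 0$. Let $S_1$ denote the set of processors whose belief is $0$ at the end of round 1, i.e., those with $b_j = 0$ together with those that received at least one $0$ in round 1. For each processor $j$ with $b_j = 1$, the event $\{j \in S_1\}$ occurs iff at least one of the $k$ transmissions of $0$ addressed to $j$ is not erased, which happens with probability $1 - p^k \ge 1 - p \ge 0.99$. Because erasures on distinct ordered pairs $(i,j)$ are independent, these indicator events are mutually independent across $j$, and a Chernoff bound gives $|S_1| \ge n/2$ with probability at least $1 - 2^{-\Omega(n)}$.

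Conditioned on $|S_1| \ge n/2$, in round 2 every processor in $S_1$ transmits $0$. A particular $j \notin S_1$ receives no $0$ in round 2 with probability at most $p^{|S_1|} \le p^{n/2} = 2^{-\Omega(n)}$, and a union bound over the at most $n$ such processors shows that every processor sees at least one $0$ by the end of round 2 with probability at least $1 - 2^{-\Omega(n)}$. Summing the two failure contributions yields an overall failure probability of at most $2^{-\alpha n}$ for some absolute constant $\alpha > 0$, as required.

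The key conceptual step is to identify the correct primitive—``knowledge of a single $0$''—and to observe that it spreads geometrically with a very large constant factor: even if only one processor holds a $0$, roughly $0.99 n$ processors learn of it after one round, and from there one further round saturates everyone. I do not expect a serious obstacle; the only delicate point is checking that both Chernoff-type bounds have rate linear in $n$ with constants depending only on $p$, so that $\alpha$ can indeed be taken as an absolute constant (for $p = 0.01$ this is immediate, and the reduction at the start of the section handles any other constant $p < 1$).
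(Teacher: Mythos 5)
Your proposal is correct and follows essentially the same route as the paper's proof: a zero-propagation protocol where a Chernoff/Hoeffding bound shows at least $n/2$ processors learn of a $0$ after the first round, after which each remaining processor fails to hear a $0$ in the second round with probability at most $p^{n/2}$, and a union bound finishes. The only cosmetic difference is that the paper repeats the rule for $100$ rounds while its analysis, like yours, only uses two.
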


\begin{proof}
\textbf{Algorithm:} The algorithm is as follows: in each round, a processor $i$ broadcasts `0' either if $x_i = 0$ or if processor $i$ has received at least one `0' in at least one of the previous rounds. Otherwise, processor $i$ broadcasts $1$. This is repeated for $100$ rounds.

Processor $j$'s output is the AND of all bits it received.

\textbf{Analysis:}
First, note that if all of the $b_i = 1$, then all processors must output $1$, no matter what messages were corrupted, since all received bits of all processors must be 1.

Now, suppose there is an $i$ for which $b_i=0$. Let $t$ be
the number of processors that received the transmission of $i$ in the
first round. The probability that processor $j$ receives only 1s in the
second round is at most $p^{t}$.

We can use Hoeffding's inequality to obtain
\[
\Pr\left[t < \frac{n}{2}\right]\leq e^{-\alpha'n}
\]
for some constant $\alpha'$.
Thus, the probability that there is some $j$ that received only 1's even
if there is a processor with a 0 is at most $n(e^{-\alpha'n}+p^{n/2})$,
bounded above by $e^{-\alpha n}$ for a constant $\alpha$.
\end{proof}

We note that the above protocol does not work in the substitution model (the model where a message may be flipped with small
probability, as opposed to being corrupted to a `?'). In fact, in \cite{goyal2008lower} it was proven that computing the AND function with high probability in the substitution model requires
$\Omega(\log \log n)$ rounds.

We next show an $O(1)$ round algorithm for \textsc{Equality Testing}. Each processor is given an $n$-bit string $S_i$
as input, and the goal is for all processors to output 1 if all their inputs are equal and 0 otherwise with probability at least
$1-2^{-\Omega(n)}$. Unless otherwise specified, each step of the algorithm is from the view of processor $i$. Roughly speaking, this step will be used in the main algorithm to verify that all processors end up with the same output string $S$.

\begin{algorithm}[H]
\caption{\textsc{EqualityTest}$(S_1,\ldots,S_n)$}\label{algo:equalitytest}
\normalsize{
$\mathsf{Enc}$ is the encoding function of a code $\mathcal{C}$ with relative distance $0.25$ and constant
rate $K$.

\hrulefill

\begin{enumerate}
\item \label{step:transmissioncode}
Transmit $(\mathsf{Enc}(S_i))_{[(i-1)K+1,iK]}$ over $K$ rounds

\item
Let $A_{t,i}$ be the $K$-bit string received from processor $t$ and $A_i=A_{1,i}A_{2,i}\ldots A_{n,i}$. Set $c_i$ to
1 if Hamming distance between $A_i$ and $\mathsf{Enc}(S_i)$ is at most $0.06Kn$ and 0 otherwise

\item The processors run the AND protocol from Lemma \ref{lem:andconst} and output the \textsf{AND}
of all $c_i$
\end{enumerate}
}
\end{algorithm}

\begin{lemma}
\label{lem:equalconst}
When the erasure probability $p \le .01$, Algorithm \ref{algo:equalitytest} correctly solves \textsc{Equality Testing} with probability
at least $1-2^{-\beta n}$ for some absolute constant $\beta$.
\end{lemma}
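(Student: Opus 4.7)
\medskip
\noindent\textbf{Proof plan.} I would split the analysis into a completeness case (all $S_i$ equal) and a soundness case (some $S_i \ne S_j$), and in both cases bound the per-processor error via a Chernoff/Hoeffding bound on the number of erasures, then invoke Lemma~\ref{lem:andconst} to control the final \textsc{AND} step. Define $T \in (\{0,1\}\cup\{?\})^{Kn}$ to be the ``intended global transmission,'' i.e.\ the concatenation whose $i$th block is $(\mathsf{Enc}(S_i))_{[(i-1)K+1,iK]}$. Each $A_i$ is then obtained from $T$ by independently erasing each symbol with probability $p\le 0.01$, so $d(A_i,T)$ equals exactly the number of erasures processor $i$ suffers. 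By a Chernoff bound the event $\mathcal{E}_i = \{d(A_i,T)\le 0.06Kn\}$ holds with probability $1-2^{-\Omega(n)}$, and by a union bound $\bigcap_i \mathcal{E}_i$ holds with probability $1-2^{-\Omega(n)}$.

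\medskip
\noindent\textbf{Completeness.} If all $S_i$ equal some $S$, then $T=\mathsf{Enc}(S)=\mathsf{Enc}(S_i)$ for every $i$, so on the event $\bigcap_i \mathcal{E}_i$ we have $d(A_i,\mathsf{Enc}(S_i))\le 0.06Kn$ and therefore $c_i=1$ for all $i$. Lemma~\ref{lem:andconst} then guarantees all processors output $1$ except with probability $2^{-\alpha n}$.

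\medskip
\noindent\textbf{Soundness.} Suppose $S_i\ne S_j$ for some pair $i,j$. Assume we are on the event $\mathcal{E}_i\cap\mathcal{E}_j$, which holds w.p.\ $1-2^{-\Omega(n)}$. If both $c_i=1$ and $c_j=1$, then by the triangle inequality
\[
d(\mathsf{Enc}(S_i),\mathsf{Enc}(S_j))\le d(\mathsf{Enc}(S_i),A_i)+d(A_i,T)+d(T,A_j)+d(A_j,\mathsf{Enc}(S_j))\le 4(0.06Kn)=0.24Kn,
\]
contradicting the relative-distance guarantee $d(\mathsf{Enc}(S_i),\mathsf{Enc}(S_j))\ge 0.25\,Kn$. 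Hence at least one of $c_i,c_j$ is $0$, and Lemma~\ref{lem:andconst} makes the processors output $0$ except with probability $2^{-\alpha n}$.

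\medskip
\noindent\textbf{Wrap-up and main obstacle.} Union-bounding the Chernoff failures, the pairwise argument, and the \textsc{AND}-protocol failure yields an overall error probability of $2^{-\beta n}$ for some constant $\beta>0$, as desired. The only place requiring care is calibrating the constants: the decoding slack $0.06Kn$ per processor must leave enough room that $4\cdot 0.06 < 0.25$ (so the triangle inequality contradicts relative distance) while simultaneously exceeding the expected erasure count $pKn\le 0.01 Kn$ by a constant factor (so Chernoff gives exponential concentration). With $p\le 0.01$ this is comfortable; the rest is routine.
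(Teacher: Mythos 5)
Your proof is correct and follows essentially the same route as the paper's: the same triangle inequality through the globally transmitted string ($T$ in your notation, $A$ in the paper's), the same $4\cdot 0.06 < 0.25$ threshold arithmetic, and the same appeal to Lemma~\ref{lem:andconst} for the final step. The only difference is that you bound the middle terms $d(A_i,T)$ and $d(T,A_j)$ probabilistically via Chernoff, whereas the paper notes that $d(A,A_i)$ is exactly the number of erasures and hence deterministically at most $d(\mathsf{Enc}(S_i),A_i)\le 0.06Kn$ once $c_i=1$, so the soundness direction needs no probabilistic event at all; both versions yield the claimed $1-2^{-\beta n}$ bound.
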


\begin{proof} Let $A$ be the string collectively transmitted by all processors in Step \ref{step:transmissioncode}. 
We know
\begin{align*}
d(\mathsf{Enc}(S_i),\mathsf{Enc}(S_j)) &\leq d(\mathsf{Enc}(S_i), A_i) + d(A_i, A) + d(A, A_j) + d(A_j,\mathsf{Enc}(S_j))\\
&\leq 2(d(\mathsf{Enc}(S_i), A_i) + d(\mathsf{Enc}(S_j), A_j))
\end{align*}
where the second inequality is because $d(A,A_i)$ is the number of `?'s received, and lower bounds $d(\mathsf{Enc}(S_i),A_i)$.

If both $d(\mathsf{Enc}(S_i), A_i)$ and $d(\mathsf{Enc}(S_j),A_j)$ are at most $0.06Kn$, then $d(\mathsf{Enc}(S_i),
\mathsf{Enc}(S_j))$ is at most $0.24Kn$, but since they are codewords of a code with relative distance $0.25$,
$\mathsf{Enc}(S_i)=\mathsf{Enc}(S_j)$, implying $S_i=S_j$. So if there is a pair $i,j$ with $S_i\neq S_j$, then either $c_i$ or $c_j$
must be 0. And then from Lemma \ref{lem:andconst}, with probability at least $1-e^{-\alpha n}$, the processors correctly
detect that there is a $c_i$ equal to 0.

On the other hand, if all the strings are indeed equal, then $c_j$ is 0 only if processor $j$ receives
fewer than $0.94Kn$ bits. We upper bound the probability that this happens 
by using Chernoff bound along with a union bound over all processors.
\begin{align*}
n\Pr[\text{processor }i\text{ receives fewer than }0.88Kn\text{ bits}]\leq ne^{-\alpha''n} \le e^{-\alpha' n}
\end{align*}
where $\alpha'$ is some constant. We let $\beta=\min\{\alpha,\alpha'\}$.
\end{proof}

Let $X=x_1x_2\ldots x_n$ be the input string and processor $i$ is given $x_i$ and is
required to output a tuple $(X_i,s_i)$, where $X_i$ an $n$-bit string and $s$ either
1, indicating success or 0, indicating failure, with the goal of having all $X_i=X$ and all $s_i =1$.
We say that an algorithm on a group of processors \textbf{succeeded} if
$X_i=X$ and $s_i=1$ for all $i$, \textbf{failed with knowledge} if $r_i=0$ for all $i$,
and \textbf{failed without knowledge} otherwise. We describe an algorithm for this
problem where each step is from the view of processor $i$ unless otherwise specified.
Recall that each broadcast is repeated $\gamma$ times to effectively reduce the erasure probability $p$ to be at most $.01$. For simplicity, we assume that $n$ is a power of 2, and so $\log n$ is an integer. It is easy to generalize the algorithm to all values of $n$.

At the high level, the algorithm proceeds as follows. We partition the processors into $n/\log n$ sets of size $\log n$ each (Step \ref{step:recurse}). Then, we recursively compute the input on each of these subsets. Now, some of these subsets will have succeeded, and some will have failed. For the ones that failed, we now recompute the input, but this time we add more processors to be ``helper processors". That is, the processors which succeeded in the recursive calls will now be used to aid the processors who failed in the recursive call by sending messages on their behalf. This can be seen in Step \ref{step:amplification}, where the processor sends $x_{\ell_i}$, which is the input to a processor which failed on the recursive call. This idea of using successful processors to help others who failed helps ensure that within a constant number of tries, with high probability all input bits will be known.

\begin{algorithm}[H]
\caption{\textsc{LearnInput}$(x_1,\ldots,x_n)$}\label{algo:learninput}
\normalsize{
$\mathsf{Enc}$ is the encoding function of a code $\mathcal{C}$ with relative distance $0.25$ and constant
rate $K$

\hrulefill

\begin{enumerate}

\item \textbf{Base Case:} If $n < 100$
\begin{enumerate}
\item Transmit $x_i$ repeatedly $100$ times, and set string $S_i$
as per
\[
(S_i)_j =
\begin{cases}
b & \text{if $b$ was received in any transmission from $j$}\\
\text{random bit} & \text{if all transmissions from $j$ are `?'}
\end{cases}
\]
and go to Step \ref{step:verification}.
\end{enumerate}

\item \textbf{Recursive Step:} 
\begin{enumerate}
\item \label{step:recurse} Recursively obtain $(X_i',r_i') = \textsc{LearnInput}\left(x_{\left\lfloor\frac{i}{\log n}\right\rfloor\log n+1},\ldots,
x_{\left\lfloor\frac{i}{\log n}\right\rfloor\log n+\log n}\right)$. We call this set of processors the \textit{group} of $i$. 

\item Broadcast $r_i'$

\item Set
$R_i$ by setting $(R_i)_j$ to 1 if only 1's were received from $j$'s group (i.e., from the processors which
$j$ computed the recursive call with) and 0 otherwise, for each $j\in[n]$.

\item \label{step:encoding} Let $i'=i~\mathrm{mod}\log n$ and transmit
$\mathsf{Enc}(X_i')_{[(i'-1)K+1,i'K]}$ over the next $K$ rounds.

\item Let $z_i$ be the number of zeros in $R_i$ and let $j =
\left\lceil\frac{iz_i}{n}\right\rceil$ and let $\ell_i$ be the index of
the $j$th zero in $R_i$.
Create set $M_{s,i}$ to be all $t$ such that
\[\frac{n(s-1)}{z_i}<t\leq\frac{ns}{z_i}\]

\item \label{step:announcement} Transmit $x_i$.

\item \label{step:amplification} Broadcast what was received
from $\ell_i$, which is either `?' or $x_{\ell_i}$. Let $M'_{j,i}$
be the set of characters received from $M_{j,i}$.

\item Set $X_i$ by setting $(X_i)_j$ to $x_i$ if $j=i$, by decoding the bits received in Step \ref{step:encoding} if $(R_i)_j=1$
and at least $0.88K\log n$ bits were received from the group of $j$, to a random bit if $(R_i)_j=1$ and fewer than $0.88K\log n$
bits were received from group $j$ in Step \ref{step:encoding}, and to $\Ind_{1\in M'_{j,i}}$ if $(R_i)_j=0$.
Proceed to Step \ref{step:verification}.

\end{enumerate}

\item \textbf{Verification of output}
\begin{enumerate}
\item \label{step:verification} Obtain $v_i=\textsc{EqualityTest}(X_1,\ldots,X_n)$ and output $(X_i,v_i)$.
\end{enumerate}

\end{enumerate}
}
\end{algorithm}

We now prove the following proposition, from which Theorem \ref{main} immediately follows.
\begin{proposition}
\label{thm:main}
Algorithm \ref{algo:learninput} runs in $O(\log^* n)$ rounds, succeeds (i.e., each processor outputs $(X, 1)$, where $X$ is the input to all processors) with probability at least $1-\frac{1}{n^5}$
and fails without knowledge with probability at most $\frac{1}{2^{7n}}$.
\end{proposition}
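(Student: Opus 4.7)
The plan is to prove all three claims jointly by induction on $n$. The base case $n < 100$ is handled directly using Step~1(a) followed by the verification. For round complexity, let $T(n)$ denote the total round count. The recursive calls on the $n/\log n$ subgroups run in parallel for $T(\log n)$ rounds, the remaining recursive-step broadcasts contribute $O(K) = O(1)$ (with each broadcast repeated the fixed $\gamma$ times to reduce the effective erasure rate to $0.01$), and the verification takes $O(1)$ by Lemma~\ref{lem:equalconst}. Thus $T(n) = T(\log n) + O(1)$, giving $T(n) = O(\log^* n)$.

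For the success probability, I would first condition on the event $\mathcal{E}$ that no recursive subcall fails without knowledge. By the inductive hypothesis and a union bound, $\Pr[\lnot\mathcal{E}] \leq (n/\log n) \cdot 2^{-7\log n} = o(1/n^5)$. Under $\mathcal{E}$ a short concentration argument shows that the zeros of $R_i$ exactly identify the processors in failed groups with probability $1 - 1/\mathrm{poly}(n)$. The inductive hypothesis also gives each group a failure-with-knowledge probability of at most $1/(\log n)^5$, so by Chernoff the number $z$ of processors in failed groups satisfies $z = O(n/(\log n)^5)$ with probability $1 - e^{-\Omega(n/(\log n)^6)}$, comfortably absorbed into the $1/n^5$ budget for $n$ large enough.

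Under these good events I would establish correctness of $X_i$ in two pieces. For each successful group, its $\log n$ members collectively broadcast the codeword $\mathsf{Enc}(X')$ across Step~\ref{step:encoding}; by a Chernoff bound each receiver gets at least $0.88K\log n$ of those symbols with probability $1-1/\mathrm{poly}(n)$, and then unique decoding succeeds by relative distance $0.25$. For each failed index $\ell$, the amplification step assigns $\Theta(n/z) = \Omega((\log n)^5)$ helper processors; each delivers $x_\ell$ to a fixed receiver with probability $(1-p)^2$ (it must receive $x_\ell$ in Step~\ref{step:announcement} and its relay in Step~\ref{step:amplification} must not be erased), so the receiver fails to recover $x_\ell$ only when every helper fails, an event of probability at most $(1-(1-p)^2)^{\Omega((\log n)^5)}$. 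Union-bounding all these failure modes over the $\mathrm{poly}(n)$ (receiver, index) pairs is easily dominated by $1/n^5$, which gives the desired success bound.

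The main obstacle is the failure-without-knowledge bound, since the analysis above only yields polynomially small failure probabilities. The key observation that rescues the argument is that the algorithm always sets $(X_i)_i = x_i$ by construction, so if the $X_i$ agree on every coordinate they must all equal the true input $X$; hence failure without knowledge can only occur when the $X_i$'s genuinely disagree as strings. Lemma~\ref{lem:equalconst} then ensures all $v_i = 0$ in that case except with probability $2^{-\beta n}$. Choosing the per-broadcast repetition constant $\gamma$ large enough to force $\beta \geq 7$ gives a failure-without-knowledge probability of at most $2^{-7n}$, completing the inductive step.
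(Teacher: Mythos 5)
Your proposal is correct and follows essentially the same approach as the paper's proof: condition on a collection of good events (no recursive call fails without knowledge, the $R_i$'s all agree, the number of failed groups is small, each receiver decodes every successful group's codeword, and each receiver hears at least one relay per failed index), bound each event's failure probability via Chernoff and union bounds, and observe that since $(X_i)_i = x_i$ is always set correctly, any disagreement between output strings is caught by the equality test, yielding the exponentially small failure-without-knowledge bound. The only cosmetic difference is that you bound the number of failed-group processors by $O(n/(\log n)^5)$ near its mean, whereas the paper uses the looser threshold of $n/\log^3 n$ failed groups; both suffice for the amplification step, since even $\log^2 n$ helpers make the per-pair miss probability far smaller than any inverse polynomial.
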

\begin{proof}
We list conditions under which the protocol definitely succeeds, and show all these conditions
hold with probability at least $1-\frac{1}{n^5}$. Define $R$ as $r_1'r_2'\ldots r_n'$ from the
output of Step \ref{step:recurse}. Define $M_s$ as all $j$ such that $\frac{n(s-1)}{z}<j\leq\frac{ns}{z}$
where $z$ is the number of 0's in $R$.

The protocol definitely succeeds if the following conditions hold:
\begin{enumerate}

\item \label{condition:boundedfailure} All but at most $\frac{n}{\log^3 n}$ groups succeed in the recursive call of Step \ref{step:recurse}.

\item \label{condition:nocatastrophe} No group fails without knowledge in the recursive call of Step \ref{step:recurse}, and $R_i=R$ for all $i$.

\item \label{condition:amplification} For all $j$ such that $(R)_j=0$, for all $i$, processor $i$ receives at least one transmission
from a processor in $M_{\ell_j,i}$ in Step \ref{step:amplification} where the $\ell_j$th 0 in $R$ occurs at $(R)_j$.

\item \label{condition:decodingradius} Each processor receives at least $0.88K\log n$ bits from
each successful group in at least one transmission in Step \ref{step:encoding} of the algorithm.

\end{enumerate}

Indeed, for any $j$ in a successful group, all processors correctly learn the input to processor $j$ because Condition
\ref{condition:decodingradius} is met. By Condition \ref{condition:nocatastrophe}, for fixed $s$, $M_{s,i}$ is the same for all $i$ since $M_{s,i}$ depends on $R_i$. For any $j$ in a failed group, by Condition \ref{condition:nocatastrophe}, $(R)_j=(R_i)_j=0$, and by Condition \ref{condition:amplification}, each processor receives at least one transmission of processor $j$'s input in Step \ref{step:amplification} and so all processors correctly learn the input to processor $j$.

We now proceed with showing a lower bound on the probability that all of these conditions hold.

We can see that Condition \ref{condition:boundedfailure} holds with probability at least $1-\frac{1}{n^6}$ since by Chernoff bounds, the number of failed groups exceeds $\frac{n}{\log^3 n}$ with probability at most $\frac{1}{n^6}$.

Now suppose Condition \ref{condition:boundedfailure} holds.
A group fails without knowledge with probability at most $\frac{1}{n^7}$ by the guarantees of the
protocol. The probability that there exists a group that failed without knowledge, by the union bound,
is therefore at most $\frac{1}{n^6}$. If no group failed without knowledge,
the only way $R_i$ cannot equal $R$ is if there is a group $M_{j, i}$ that processor $i$ did not receive a single bit from.
The probability that processor $i$ does not receive a single bit from this group is
$p^{\gamma\log n}$, which for appropriate $\gamma$ is at most $\frac{1}{n^8}$. Thus, the probability that there
is some $i,j$ pair such that processor $i$ does not receive a single bit from group $j$ is at most $\frac{1}{n^6}$
by a union bound.
So the probability that Condition \ref{condition:nocatastrophe} is not met (given that Condition \ref{condition:boundedfailure} is met) is at most $\frac{2}{n^6}$.

Note that $R_i=R$ means $M_s=M_{s,i}$ for all $i$. It follows from Chernoff bounds that the number of processors
in $M_s$ that receive the bit transmitted by processor $s$ is at least $\log n$ with probability at least
$1-\frac{1}{n^6}$. The probability that processor $i$ does not receive any bits from processors in $M_s$ in
any of the repetitions of Step \ref{step:amplification} is at most $p^{\gamma\log n}$, which can be
made smaller than $\frac{1}{n^8}$ by setting $\gamma$ to be large enough. Now by taking a union bound over all pairs $(i,s)$
we can conclude that Condition \ref{condition:amplification} does not hold with probability at most
$\frac{1}{n^6}$.

The probability that processor $i$ receives fewer than $0.88K\log n$ bits from group $j$ in all repetitions of
Step \ref{step:encoding} is at most $\frac{1}{n^{c\gamma}}$ for some constant $c$ by Chernoff bounds.
A union bound across all processor-group pairs tells us that Condition \ref{condition:decodingradius} does not hold with probability at
most $\frac{1}{n^{c\gamma-2}}$ which can be made smaller than $\frac{1}{n^6}$ with large enough $\gamma$.

Based on the bounds we obtained on the probability that each of Conditions \ref{condition:boundedfailure}, \ref{condition:nocatastrophe},
\ref{condition:amplification}, \ref{condition:decodingradius} don't hold, we can conclude that the probability that all the
conditions hold is at least $1-\frac{1}{n^5}$.

It remains to show that the probability that the processors failed without knowledge is at most $2^{-7n}$. If
there is $X_i$ such that $X_i\neq X$, then it differs from $X$ in some index $j$, which means $(X_i)_j
\neq (X_j)_j$ by construction of $X_j$ implying $X_i\neq X_j$. Thus, a failure without knowledge happens
only if Step \ref{step:verification} fails, which happens with probability at most $e^{-\beta n}$, which can
be made smaller than $2^{-7n}$ by choosing the number of repetitions $\gamma$ to be a large enough constant.

The number of rounds this algorithm takes is given by $T(n)$, which satisfies the recurrence relation $T(n)=T(\log n)+L$ where $L$ is a constant and
with base case $T(100) = O(1)$, which solves to $T(n)=O(\log^* n)$.
\end{proof}

\section{An $O(1)$ algorithm for large alphabets}\label{sec:largealph}
For large alphabets, in the regime where the alphabet $\Sigma$ is $\mathbb{F}_q$ and
$q = \mathrm{poly}(n)$, we give a constant round algorithm to have all processors learn the input
$X$ with probability at least $1-\frac{1}{\mathrm{poly}(n)}$. Unless otherwise specified,
the algorithm is from the view of processor $i$. While our algorithm works for any $q$ that
is polynomial in $n$, for simplicity of exposition we assume $q\geq n^6$ and that $q$ is a prime.

\begin{algorithm}
\caption{\textsc{LearnInputLargeAlphabet}$(x_1,\ldots,x_n)$}\label{algo:learnlargealph}
\normalsize{
Let $F$ be a function that encodes subsets of $[6\log n]$ as elements of $\mathbb{F}_q$

\hrulefill

\begin{enumerate} 

\item Let $k = \lfloor 6\log n \rfloor $ and determine $B_i = \{\frac{nj}{k}+1,\ldots,
\frac{n(j+1)}{k}\}$, where $j$ is chosen such that $i\in B_i$

\item \label{step:firsttransmit} Broadcast $x_i$ for $10$ rounds

\item For each $t$ from 1 to $10$ and for each processor in $B_i$ from
which an entry was received in round $t$ of Step \ref{step:firsttransmit},
choose the processor with probability $\frac{1}{2(1-p)}$ and choose $i$
with probability $\frac{1}{2}$. Let $T_{t,i}$ be the set of
chosen elements.

\item For the next $20$ rounds, processor $i$ transmits all the
$\sum_{b\in T_{t,i}}x_b$ (where the $x_b$ are added as elements of $\mathbb{F}_q$) and $F(T_{t,i})$

\item Output $X_i$ consistent with all received pairs
$\left(\sum_{b\in T_{t,i}}x_b,F(T_{t,i})\right)$. If there is more than one possibility for such an $X_i$, pick one at random.

\end{enumerate}
}
\end{algorithm}

\begin{theorem}\label{thm:largealph}
With probability at least $1-\frac{1}{\mathrm{poly}(n)}$, after running Algorithm \ref{algo:learnlargealph}, all processors will know all other processors' inputs. Furthermore, the algorithm terminates within $O(1)$ rounds.
\end{theorem}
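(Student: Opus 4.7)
The argument has two pieces: the round bound and the correctness of the decoding. The round bound is immediate: Step~2 uses $10$ rounds of broadcast, Step~4 uses $20$ rounds, and the remaining steps are local computation, so the algorithm terminates in $O(1)$ rounds.

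For correctness, since the true input $X$ trivially satisfies every transmitted equation $(\sum_{b \in T_{t,i}} x_b,\,F(T_{t,i}))$, it suffices to show that with probability $\geq 1 - 1/\mathrm{poly}(n)$, $X$ is the \emph{unique} $X' \in \mathbb{F}_q^n$ consistent with all equations that each processor $r$ successfully receives. The critical observation is that each $T_{t,i}$ is distributed as a uniformly random subset of $B_i$, with every element included independently with probability $1/2$: processor $i$ includes itself with probability $1/2$ by design, and for $b \in B_i \setminus \{i\}$ the inclusion probability is $(1-p)\cdot\tfrac{1}{2(1-p)} = 1/2$, where the factor $\tfrac{1}{2(1-p)}$ is chosen precisely to cancel the erasure probability on $b$'s Step~2 transmission.

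Consequently, the equations received by $r$ about the variables in a given block $B$ form a random $m \times |B|$ matrix $M_B$ with i.i.d.\ Bernoulli$(1/2)$ entries, where $m$ is the (random) number of equations for which both the sum and the descriptor $F(T_{t,i})$ reach $r$ unerased. I would then argue that $M_B$ has full column rank over $\mathbb{F}_q$ with high probability, which is equivalent to unique decodability of $X|_B$ from the received system. For any fixed nonzero $v \in \mathbb{F}_q^{|B|}$, conditioning a row on all of its coordinates except one in $\mathrm{supp}(v)$ shows $\Pr[\mathrm{row}\cdot v = 0] \leq 1/2$, so $\Pr[M_B v = 0] \leq 2^{-m}$ by independence of rows. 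Union bounding over nonzero $v$, over blocks $B$, and over receivers $r$, combined with a Chernoff bound that guarantees $m$ is large with high probability, will yield the desired uniqueness, and then $r$ reads off each coordinate of $X$ from the unique solution of the corresponding per-block system.

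The main obstacle I anticipate is closing this final union bound: since $q = \mathrm{poly}(n)$ contributes $\log q = \Theta(\log n)$ per coordinate and the alternative space has size $q^{|B|}$, one needs roughly $m \gtrsim |B|\log_2 q + \Omega(\log n)$ surviving equations per block at each receiver. I would handle this by Chernoff-bounding $m$ and tuning the implicit broadcast-repetition constant $\gamma$ from the paper's conventions so that sufficiently many equation-pairs survive erasures with failure probability $1/\mathrm{poly}(n)$; once the per-block uniqueness is established at every processor, the output is $X$ as required.
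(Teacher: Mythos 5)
Your round count and your reduction of correctness to ``the received binary coefficient matrix has full column rank over $\mathbb{F}_q$'' match the paper, as does the observation that the $\tfrac{1}{2(1-p)}$ sampling probability cancels the erasure probability to make each $T_{t,i}$ a uniform i.i.d.\ Bernoulli$(1/2)$ subset of $B_i$. But the way you propose to prove full rank has a genuine quantitative gap, and you have in fact identified it yourself without resolving it. Union bounding $\Pr[M_Bv=0]\le 2^{-m}$ over all $q^{|B|}-1$ nonzero $v$ requires $m\gtrsim |B|\log_2 q=\Theta(\log n)\cdot\Theta(\log n)=\Theta(\log^2 n)$ surviving equations per block at each receiver. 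The algorithm, however, only ever generates $10|B_i|=O(\log n)$ equations per block in total (each of the $k=\lfloor 6\log n\rfloor$ processors in a block broadcasts $10$ pairs), so $m=O(\log n)$ no matter how the erasures fall. Tuning the repetition constant $\gamma$ cannot fix this: repetitions only reduce erasures on the \emph{same} equations; to manufacture $\Theta(\log^2 n)$ \emph{distinct} equations per block you would need $\Theta(\log n)$ rounds, destroying the $O(1)$ round bound.

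The missing idea is the paper's Lemma~\ref{lem:randomrank}, which gets full rank from only $5k$ rows with failure probability $e^{-0.4k}\le n^{-2.4}$, \emph{independent of $q$}. The argument is not a union bound over the kernel but an incremental span-growth argument: as long as the span $V$ of the rows seen so far is a proper subspace of $\mathbb{F}_q^{k}$, some standard basis vector $e_i$ lies outside $V$, and pairing each binary vector $v$ with $v$ plus $e_i$ shows a fresh uniform binary row escapes $V$ with probability at least $\tfrac12$; hence the rank stochastically dominates a $\mathrm{Bin}(5k,\tfrac12)$ count and reaches $k$ except with probability $e^{-0.4k}$ by Chernoff. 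With that lemma in place, your remaining steps (Chernoff bound to get $m\ge 5k$ surviving equation pairs, then a union bound over the $O(n\cdot n/\log n)$ receiver--block pairs) go through exactly as in the paper. So the structure of your proof is right, but the specific rank argument you chose would fail, and replacing it with the span-growth lemma is essential rather than a matter of constants.
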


As a first ingredient towards proving Theorem \ref{thm:largealph}, we prove the following lemma.
\begin{lemma}
\label{lem:randomrank}
If $A$ is a $5k\times k$ random binary matrix where each entry is i.i.d.
generated by flipping a fair coin, then with probability at least $1-e^{-0.4k}$,
$A$ is full rank.
\end{lemma}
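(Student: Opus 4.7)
The plan is a direct union bound over nonzero kernel vectors. The matrix $A$ fails to be full rank exactly when there is some nonzero $c\in\mathbb{F}_2^k$ with $Ac=0$ (i.e., some nontrivial $\mathbb{F}_2$-linear dependence among its columns). I would bound the probability of this bad event as follows.

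Fix any nonzero $c\in\mathbb{F}_2^k$ and let $S\subseteq[k]$ be its support, which is nonempty. The $j$-th coordinate of $Ac$ is the XOR of the entries of row $j$ indexed by $S$. Since these entries are i.i.d.\ fair coin flips and $S$ is nonempty, this XOR is itself a uniform bit in $\{0,1\}$. Moreover, the rows of $A$ are independent, so the coordinates of $Ac$ are mutually independent fair bits. Hence
\begin{equation*}
\Pr[Ac = 0] = 2^{-5k}.
\end{equation*}
Taking a union bound over the $2^k-1$ nonzero choices of $c$,
\begin{equation*}
\Pr[A \text{ is not full rank}] \;\le\; (2^k-1)\cdot 2^{-5k} \;<\; 2^{-4k}.
\end{equation*}

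Finally I would convert bases: $2^{-4k}=e^{-(4\ln 2)k}$ and $4\ln 2 > 2.7 > 0.4$, so $2^{-4k} < e^{-0.4k}$ for all $k\ge 1$, giving the claim. There is no real obstacle here; the stated bound $e^{-0.4k}$ has substantial slack relative to what the union bound already gives, so essentially the only thing to verify carefully is the ``nonempty XOR of independent fair coins is a fair coin'' step that makes each coordinate of $Ac$ uniform, together with the independence of those coordinates across rows.
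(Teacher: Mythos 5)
There is a genuine gap, and it is about the field over which ``full rank'' must hold. Your argument correctly shows that $A$ has full column rank over $\mathbb{F}_2$ with probability at least $1-2^{-4k}$. But the paper invokes Lemma~\ref{lem:randomrank} in the proof of Theorem~\ref{thm:largealph} to conclude that the sampled binary rows span $\mathbb{F}_q^k$, where $q=\mathrm{poly}(n)$ is a large prime and the linear system $Ay_{B_i}=Ax_{B_i}$ is solved over $\mathbb{F}_q$; so the rank that matters is the rank over $\mathbb{F}_q$. Full $\mathbb{F}_2$-rank of a $0/1$ matrix does not imply full $\mathbb{F}_q$-rank (an odd $k\times k$ minor is nonzero mod $2$ but may be divisible by $q$), and your union bound does not transfer: over $\mathbb{F}_q$ one must rule out $Ac=0$ for all nonzero $c\in\mathbb{F}_q^k$. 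The per-vector bound $\Pr[Ac=0]\le 2^{-5k}$ still holds (condition on all but one coordinate in the support of $c$; the remaining uniform bit matches the required value with probability at most $1/2$), but after normalizing there are about $q^{k-1}$ kernel directions to union over, giving $q^{k-1}2^{-5k}$, which is vacuous as soon as $q>32$ --- and here $q\ge n^6$.

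The paper's proof sidesteps the enumeration entirely: it reveals the $5k$ binary rows one at a time and shows that for any proper subspace $V\subsetneq\mathbb{F}_q^k$ there is a standard basis vector $e_i\notin V$, so pairing each binary vector $v$ with the vector $v'$ obtained by flipping its $i$-th bit (so $v-v'=\pm e_i$) shows at least half of the $2^k$ binary vectors lie outside $V$. Hence each new row increases the dimension of the span with probability at least $1/2$ regardless of the history, and a Chernoff bound on getting at least $k$ successes in $5k$ trials yields the $1-e^{-0.4k}$ guarantee. That argument is field-agnostic precisely because it exploits only the fact that the rows are $0/1$ vectors; some idea of this kind (rather than a union bound over kernel vectors) is needed to make the lemma serve its purpose in Section~\ref{sec:largealph}.
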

\begin{proof}
Suppose $V$ is a subspace of $\mathbb{F}_q^{k}$ that is not equal to all of
$\mathbb{F}_q^{k}$, then we can find standard basis vector $e_i$
that is not in $V$. Then for any binary vector $v$, consider
$v'$ with the bit at the $i$-th
coordinate flipped. Either $v$ or $v'$ is not in $V$, which
means at least half of the binary vectors are not in
$V$, which means each new vector has probability at least
$\frac{1}{2}$ of not being in $V$. If we let
$V=\mathsf{span}\{\text{vectors drawn so far}\}$,
then each draw has a probability at least $\frac{1}{2}$ of
increasing the dimension.
Suppose we flip $5k$ coins, the probability that the number of heads is at most
$k$ is an upper bound on the probability of the span of $5k$ randomly drawn
vectors not being the whole space.

By Chernoff bounds, this probability is at most $e^{-0.4k}$.
\end{proof}

\begin{proof}[Proof of Theorem \ref{thm:largealph}]

Each $T_{t,i}$ is a uniformly random subset of input bits
of set $B_i$. Let $x_{B_i}$ be a $k$-dimensional vector of the
inputs to processors in $B_i$, then the transmitted characters in
Round 5 are of the form $(\langle a_{B_i}, x_{B_i}\rangle, F(T_{t, i}))$ where $a_{B_i}$
is a random binary vector, and $F(T_{t, i})$ is an encoding of $a_{B_i}$.
The transmitted characters can be viewed as elements in the
vector $Ax_{B_i}$, where $A$ is a matrix whose rows are the $a_{B_i}$. A single processor's output of $x_{B_i}$ is
given by sampling rows of the equation $Ay_{B_i}=Ax_{B_i}$ where
$y_{B_i}$ is indeterminate and solving for $y_{B_i}$. If the
number of sampled rows is at least $5k$, then from
Lemma \ref{lem:randomrank} the probability that the sampled
rows span $\mathbb{F}_q^k$ and hence give a unique solution to
$y_{B_i}$ is at least $1-\frac{1}{n^{2.4}}$.

The probability that the number of sampled rows for a group is
less than $5k$ can be upper bounded by $\frac{1}{n^5}$
using Chernoff bounds.

So by union bound over all group-processor pairs (i.e., all pairs $(i, B_j)$), we get
a $\frac{1}{\mathrm{poly}(n)}$ upper bound on the failure probability.
\end{proof}

\bibliographystyle{alpha}
\bibliography{report}

\appendix
\section{An $O(1)$ protocol for computing any symmetric function}\label{sec:symfunc}

We show that any symmetric function can be computed within $O(1)$ rounds in the model.
Symmetric functions are functions whose value doesn't change under permutation of the
input bits. In other words, these functions only depend on the Hamming weight of
the input string. Hence, an algorithm for every processor to learn the Hamming
weight of the string leads to an algorithm to compute any symmetric function. Our algorithm is inspired by a similar algorithm (for a different model) of \cite{goyal2008lower}.

\begin{theorem}\label{thm:symmetricprotocol}
There is an $O(1)$ round algorithm in the noisy broadcast parallel erasure model that computes
\textsf{Hamming Weight}$(X)$ with probability at least $0.75$.
\end{theorem}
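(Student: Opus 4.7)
The plan is to reduce the Hamming weight computation to learning the entire input $X$ in $O(1)$ rounds, after which each processor computes the weight locally. Since Theorem~\ref{thm:largealph} already gives such an algorithm in $O(1)$ rounds for polynomial-size alphabets, the idea is to adapt it to the binary-alphabet setting by partitioning the $n$ processors into $n/\log n$ consecutive groups of size $\log n$ each. Each group would then act as a single ``super-processor'' whose input is the $\log n$-bit concatenation of its members' bits, viewable as an element of $\mathbb{F}_q$ for some prime $q$ of order $n^6$.

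The protocol would run in two phases. First, a within-group dissemination phase in which every group member broadcasts its own bit for a constant number of rounds, so that by a Chernoff-type argument each group member learns its group's full $\log n$-bit input with sufficient probability. Second, a simulation of Algorithm~\ref{algo:learnlargealph} at the super-processor level: one round of that algorithm, in which each super-processor transmits one $\mathbb{F}_q$-symbol, would be implemented by one round of our binary-alphabet model in which the $\log n$ members of each group each broadcast one bit of the symbol. By Theorem~\ref{thm:largealph}, after the simulated $O(1)$ rounds every processor learns every super-processor's input---hence the entire string $X$---with probability $1 - 1/\mathrm{poly}(n)$. Each processor then outputs the Hamming weight of $X$ locally, and the overall success probability comfortably exceeds $0.75$.

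The main obstacle will be the within-group dissemination. With groups of size $\log n$ and any constant number of broadcast repetitions $T$, each group member fails to learn some bit of its group with probability roughly $\log n \cdot p^T$, and the per-group failure after a union bound over all $\log n$ members is $\Theta(\log^2 n \cdot p^T)$, which is too large to union-bound over $n/\log n$ groups for constant $T$. I would overcome this either by using a more structured within-group scheme (for instance, employing short locally-computable redundancy or relaying bits overheard from other groups to patch erasures) or by making the outer simulation tolerant of a small fraction of super-processors whose inputs were not correctly learned by all their members. Balancing these two phases so that the total round complexity stays at $O(1)$ while the overall success probability remains above $0.75$ is the technical crux.
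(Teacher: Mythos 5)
Your approach diverges fundamentally from the paper's, and it contains a gap you have correctly identified but underestimate. The paper does not try to learn the full input $X$ at all. Instead, it exploits the fact that the Hamming weight is a statistic: each processor broadcasts its bit once, everyone estimates the weight (with error $O(\sqrt{n})$) by counting received 1s and scaling by $1/(1-p)$, and a second phase pinpoints the exact value using a threshold test combined with the anticoncentration bound $\theta_{\ell+1} - \theta_\ell \gtrsim 1/\sqrt{n}$ from \cite{goyal2008lower}. Three offset systems of overlapping intervals and a majority vote handle boundary cases. This avoids full-input recovery entirely, which is why it runs in $O(1)$ rounds.

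Your reduction to Theorem~\ref{thm:largealph} cannot be made to work in $O(1)$ rounds, and the obstacle is not a balancing problem but a fundamental one. Learning a $\log n$-bit string among $\log n$ processors with the failure probability needed here is exactly the problem that Algorithm~\ref{algo:learninput} is designed to solve, and it takes $\Theta(\log^* \log n) = \Theta(\log^* n)$ rounds. With a constant number $T$ of naive repetitions, a fixed (bit, member) pair fails with probability $p^T$, a group fails with probability $\Theta(\log^2 n \cdot p^T)$, and over $n/\log n$ groups the overall failure is $\Theta(n \log n \cdot p^T)$, which exceeds any constant for constant $T$. Even weakening to success probability $0.75$ does not help, since you still need per-group failure $o(\log n / n)$. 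Your second proposed fix is also unsound for a sharper reason: a group member that mis-learns its group's input produces a \emph{wrong} linear combination, not an erased one. At the super-processor level this is a substitution error, which the model (and Lemma~\ref{lem:randomrank} as used in the proof of Theorem~\ref{thm:largealph}) does not tolerate. There is no way for that member to signal ``$?$''; in this model processors transmit bits, and erasures occur only on reception. So the approach as sketched does not give a valid proof; the statistical route the paper takes is genuinely needed for the $O(1)$ bound.
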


Our algorithm proceeds in two phases:
\begin{enumerate}
\item Divide the interval $[0,n]$ into subintervals of length $c\sqrt{n}$ and find
which interval the Hamming weight belongs to.
\item Figure out exactly which integer in the interval is the Hamming weight.
\end{enumerate}

More precisely, the first step will give us three intervals, and we will show for at least two of these intervals, with high probability all processors will end up with the same interval. Then, we will run the second step (where we pinpoint the exact hamming weight) on each of the three intervals, and take a majority vote to compute the final output.

We describe the first step below:
\begin{algorithm}
\caption{\textsc{DetermineInterval}$(x_1,\ldots,x_n)$}\label{algo:determineinterval}
\normalsize{
$A_1,A_2,\ldots,A_k$ are disjoint intervals of size $\approx 2t\sqrt{n}$
covering $[0,n]$, with $t$ chosen later.

Let $A_i$ be $\varnothing$ if $i<1$ or $i>k$.

$B_i:=A_i\cup A_{i+1}\cup A_{i+2}$.

$\mathcal{B}_s:=\{B_i:i\equiv s~\mathrm{mod}~3\}$.

$\mathsf{Enc}$ is the encoding function of a code with relative distance $0.25$ and constant
rate $K$.

\hrulefill

\begin{enumerate} 
\item Transmit $x_i$

\item Compute $h_i:=\frac{\text{number of 1's received}}{1-p}$

\item For $s=0,1,2$:
\begin{enumerate}

\item Find interval in $\mathcal{B}_s$ containing $h_i$, called $I$. $I$ is encoded as a string $s_I$ (of size $O(\log n)$).

\item \label{step:encodeinterval} Let $i'=i~\mathrm{mod}\log n$ and transmit $\mathsf{Enc}(s_I)_{[K(i'-1) + 1, Ki']}$
over $K$ rounds

\item $C_{i,s}:=
\begin{cases}
s_I &\text{if at least $.88K\log n$ bits were  received in Step \ref{step:encodeinterval}}\\
\text{decoded string} &\text{if fewer than $.88K\log n$ bits were  received in Step \ref{step:encodeinterval}}
\end{cases}$

\end{enumerate}
\item \label{step:returninterval} Return $C_{i,0},C_{i,1}$ and $C_{i,2}$.

\end{enumerate}
}
\end{algorithm}

\begin{lemma}
With probability at least $1-\exp(-\Omega(n))$, for at least two $t$ in $\{1,2,3\}$, all $C_{i,t}$
outputted in Step \ref{step:returninterval} of Algorithm \ref{algo:determineinterval} are
equal and correspond to an interval containing $\textsf{Hamming Weight}(X)$.
\end{lemma}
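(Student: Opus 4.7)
The plan is to establish three ingredients whose conjunction yields the lemma: (i) a concentration inequality for the individual Hamming-weight estimates $h_i$; (ii) a shifting / pigeonhole argument which shows that at most one of the three families $\mathcal{B}_0,\mathcal{B}_1,\mathcal{B}_2$ can be ``unlucky'' about the placement of $H := \textsf{Hamming Weight}(X)$ relative to an interval boundary; and (iii) a distributed decoding step that converts approximate agreement on the correct interval into exact agreement on the $C_{i,s}$.

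For (i), fix processor $i$ and let $N_i$ be the number of $1$'s it receives in Step~1, so that $N_i = \sum_j x_j Y_{ij}$ where the $Y_{ij}$ are independent $\mathrm{Bernoulli}(1-p)$ indicators of non-erasure. Then $\mathbb{E}[N_i] = (1-p)H$, and Hoeffding's inequality gives $\Pr[|h_i - H| > c\sqrt{n}] \le 2\exp(-2 c^2(1-p)^2)$, which can be driven below any desired absolute constant $\delta > 0$ by choosing the constant $c$ large enough. I then fix the algorithm's parameter $t$ to be a constant satisfying $t \ge c$.

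For (ii), observe that within family $\mathcal{B}_s$ the intervals partition $[0,n]$ into consecutive blocks of length $6t\sqrt{n}$ whose boundaries lie at positions congruent to $2t\sqrt{n}(s+2) \pmod{6t\sqrt{n}}$. As $s$ ranges over $\{0,1,2\}$ these three residues are mutually $2t\sqrt{n}$ apart modulo $6t\sqrt{n}$, so $H$ is within distance $t\sqrt{n}$ of a boundary for at most one value of $s$. Call the other two values \emph{good}; for each good $s$ the interval $I_s \in \mathcal{B}_s$ containing $H$ also contains the entire window $[H - t\sqrt{n},\, H + t\sqrt{n}]$.

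For (iii), fix a good $s$. The variables $\{Y_{ij}\}_j$ used to form different $h_i$'s are disjoint across $i$, so the events $\{h_i \notin I_s\}$ are mutually independent across $i$, each occurring with probability at most $\delta$. A multiplicative Chernoff bound therefore yields that the number of processors whose local computation of $s_I$ differs from $s_{I_s}$ is at most $0.06n$ with probability $1-\exp(-\Omega(n))$, provided $\delta$ was chosen small enough. On this event at least $0.94n$ senders broadcast chunks of the correct codeword $\mathsf{Enc}(s_{I_s})$ in Step~3b. Chernoff-plus-union-bound estimates essentially identical to those used for Condition~\ref{condition:decodingradius} in Proposition~\ref{thm:main} then show that every receiving processor $i$ accumulates enough symbols agreeing with $\mathsf{Enc}(s_{I_s})$ to bring its received word within the $0.125$ decoding radius of $\mathcal{C}$, so its decoder outputs exactly $s_{I_s}$. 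Union-bounding over the two good values of $s$ and the $n$ receivers preserves the $\exp(-\Omega(n))$ failure probability.

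The main obstacle is juggling the three sources of degradation in step (iii) -- disagreeing senders, per-symbol erasures, and the $\log n$ factor lost in a per-chunk union bound -- against the fixed $0.125$ decoding radius. These are absorbed in a standard order: pick $\delta$ small enough to leave slack below $0.125$; pick the implicit repetition factor $\gamma$ large enough to render per-chunk erasures negligible; finally pick $c$ (and hence $t$) large enough that the required $\delta$ is achievable.
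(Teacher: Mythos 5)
Your argument is correct and follows essentially the same route as the paper: (i) concentration of each $h_i$ around the true Hamming weight $H$; (ii) for at least two of the three shifted interval families, $H$ lies in the interior of its containing interval far from any boundary; (iii) for each such good $s$, a large majority of processors agree on $s_{I_s}$, and a Chernoff-plus-union-bound argument over the encoded transmission lets every processor decode $s_{I_s}$. The one place you add genuine value is in step (ii): the paper simply asserts that for at least two values of $s$, $h_i$ lands in the correct interval with probability at least $0.99$, silently merging the deterministic pigeonhole fact (boundaries of $\mathcal{B}_0, \mathcal{B}_1, \mathcal{B}_2$ are interleaved at spacing $2t\sqrt{n}$, so $H$ can be within $t\sqrt{n}$ of a boundary in at most one family) with the probabilistic claim about $h_i$; you separate these cleanly, which is a real clarification rather than a different method. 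One small caveat worth policing is the strict-versus-nonstrict inequality when $H$ sits exactly at distance $t\sqrt{n}$ from two adjacent boundaries: the conclusion you want is that the closed window $[H - t\sqrt{n}, H + t\sqrt{n}]$ is contained in $I_s$, which still holds in that edge case, but ``at most one value of $s$'' should be read with strict distance $<t\sqrt{n}$; the paper glosses over the same point.
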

\begin{proof}
By Chernoff bounds, the probability that $h_i$ deviates from the
truth by $t\sqrt{n}$ is at most $e^{-Ct^2}$ for an absolute constant $C$.
This can be made smaller than $0.01$ with appropriate choice of a constant $t$. Then for at
least two values of $s$, $h_i$ lies in the correct interval in $\mathcal{B}_s$ with
probability at least $0.99$. Without loss of generality, say this happens for
$s=0$ and $s=1$. Using Chernoff bounds, we can show that for some constant $c$,
with probability at least $1-\exp(-\Omega(n))$, at least $0.95$
fraction of the processors decode the correct interval in $\mathcal{B}_0$ and
$\mathcal{B}_1$.

And assuming at least $0.95$ fraction of the processors decode the correct intervals
in $\mathcal{B}_0$ and $\mathcal{B}_1$, we can show once again using Chernoff
bounds and union bound, that the number of bits from the encoded string of the
correct interval received by each processor is more than $0.9Kn$ with probability
at least $1-\exp(-\Omega(n))$, which means with exponentially high probability,
every processor decodes the correct interval in $\mathcal{B}_0$ and $\mathcal{B}_1$.
\end{proof}

For the second step, our goal is the
following: given that every processor knows an interval $[a,b]$ in which the
Hamming weight of the input string lies, it can recover the value of the Hamming
weight in $O(1)$ rounds.

\begin{algorithm}
\caption{\textsc{PinpointWeight}$(x_1,x_2,\ldots,x_n; [a,b])$}\label{algo:pinpointweight}
\normalsize{
$[a,b]$ is the interval of length up to $3\sqrt{n}$ where the Hamming weight is promised to lie

$\mathsf{Enc}$ is the encoding function of a code $\mathcal{C}$ with relative distance $0.25$ that maps
$\log n$ bit strings to $K\log n$ bit strings

Let $\theta_s$ be defined as the probability that when flipping $s$ coins, each coming up heads with probability $1-p$, at least $(1-p)\left(\frac{a+b}{2}
\right)$ come up heads.

\hrulefill

\begin{enumerate}
\item Transmit $x_i$

Let $Y$ be the number of 1's received.

\item $\beta_i:=\begin{cases}
1 &\text{if number of 1's received is greater than $(1-p)\left(\frac{a+b}{2}\right)$}\\
0 &\text{otherwise}
\end{cases}$
\item \label{step:comparison} Transmit $\beta_i$
\item Let $\widehat{\theta}_{s,i}$ be the fraction of received
bits from Step \ref{step:comparison} that are 1 (i.e., the total number of 1's received, divided by the total number of 1's or 0's received).
\item \label{step:estimatewt} $\widehat{s}_i=\arg\min_{\ell}|\theta_\ell-\widehat{\theta}_{s,i}|$
\item \label{step:encoding2}
Let $i'=i~\mathrm{mod}\log n$ and transmit $\mathsf{Enc}(\widehat{s}_i)_{[K(i'-1) + 1, Ki']}$
over $K$ rounds
\item \label{step:decoding} $\tilde{s}_i=
\begin{cases}
\text{decoded string} &\text{if at least $.88K\log n$ bits were  received in Step \ref{step:encoding2}}\\
\widehat{s}_i &\text{if fewer than $.88K\log n$ bits were  received in Step \ref{step:encoding2}}
\end{cases}$
\end{enumerate}
}
\end{algorithm}

\begin{lemma}
On running Algorithm \ref{algo:pinpointweight}, all processors return the
Hamming weight $s$ of $X$ with probability at least $0.9$.
\end{lemma}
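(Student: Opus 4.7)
The plan is to fix the input $X$ with $\textsf{Hamming Weight}(X)=s$ and argue in three stages: each processor's estimate $\widehat s_i$ equals $s$ with high probability; consequently almost all processors in every chunk group $\{i:i\bmod\log n=i'\}$ agree; and the error-correcting decoding in Step \ref{step:decoding} then yields $\tilde s_i=s$ at every processor, analogously to the end-game of the proof of Proposition \ref{thm:main}.

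First, condition on $X$. The indicator $\beta_j$ depends only on the erasures of messages arriving at processor $j$ in Step 1, so the $\beta_j$'s are mutually independent Bernoulli$(\theta_s)$ random variables (up to an inconsequential $O(1/s)$ correction from $j$'s own broadcast). Hence $\widehat\theta_{s,i}$ is, after also accounting for the erasures in Step \ref{step:comparison}, the sample mean of $\Theta(n)$ independent Bernoulli$(\theta_s)$ variables, so Chernoff gives
\[
\Pr\bigl[\,|\widehat\theta_{s,i}-\theta_s|>\eta/\sqrt n\,\bigr] \;\le\; \exp\bigl(-\Omega(\eta^{2})\bigr)
\]
for any constant $\eta$.

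The main obstacle is to lower bound the gap between consecutive $\theta_\ell$'s across $[a,b]$, which has length $O(\sqrt n)$. Writing $Y_\ell\sim\text{Bin}(\ell,1-p)$ and $\tau=(1-p)(a+b)/2$, a local central limit theorem (equivalently, Stirling applied to ratios of binomial tails) yields
\[
\theta_{\ell+1}-\theta_\ell \;=\; \Pr[Y_{\ell+1}\ge\tau]-\Pr[Y_\ell\ge\tau] \;=\; \frac{1-p}{\sqrt{2\pi\,\ell\,p(1-p)}}\bigl(1+o(1)\bigr)
\]
uniformly over $\ell\in[a,b]$; since $\ell\le n$ and $p\le 0.01$ is bounded away from $1$, consecutive $\theta_\ell$'s are separated by at least $c_p/\sqrt n$ for an absolute constant $c_p>0$. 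Choosing $\eta\ll c_p$ in the concentration bound then forces $\widehat s_i=s$ in Step \ref{step:estimatewt} with probability at least $1-\varepsilon$ for a constant $\varepsilon$ we can make arbitrarily small. The endpoints of $[a,b]$ and the slight non-uniformity of the normal approximation require a little care but are handled by standard binomial tail asymptotics.

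Conditional on $X$ the events $\{\widehat s_i=s\}$ are independent, so a Chernoff bound across all $n$ processors ensures that with probability at least $0.99$, more than a $(1-\tfrac18)$-fraction of senders in every chunk group $i'$ satisfy $\widehat s_i=s$. Combined with the same Chernoff/union bound used in the proof of Proposition \ref{thm:main} (every receiver obtains at least $0.88K\log n$ transmitted symbols from Step \ref{step:encoding2}), the assembled received codeword lies within the decoding radius of $\mathsf{Enc}(s)$, so every processor decodes $\tilde s_i=s$ in Step \ref{step:decoding}. Aggregating the small failure probabilities leaves total success probability well above $0.9$.
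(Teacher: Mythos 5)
There is a genuine gap in your third paragraph: the events $\{\widehat{s}_i=s\}$ are \emph{not} independent across processors, even conditioned on $X$. Every processor's estimate $\widehat{\theta}_{s,i}$ is a (noisily subsampled) average of the \emph{same} realized bits $\beta_1,\ldots,\beta_n$; as you yourself note in your second paragraph, each $\widehat{\theta}_{s,i}$ is a sample mean built from this one shared sample. If the common empirical fraction $\widehat{\theta}_s=\frac{1}{n}\sum_j\beta_j$ happens to deviate from $\theta_s$ by more than roughly half the gap $c_p/\sqrt{n}$ --- and this is a deviation of order one standard deviation of $\widehat{\theta}_s$, hence a \emph{constant}-probability event that no amount of repetition removes --- then essentially all processors compute the wrong $\widehat{s}_i$ simultaneously. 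The paper's proof handles exactly this by splitting the error as $|\theta_s-\widehat{\theta}_s|+|\widehat{\theta}_s-\widehat{\theta}_{s,i}|$: the first term is a single global event controlled only with probability $0.99$, and only the second terms are (conditionally) independent across $i$, so Chernoff may be applied to them to get a $0.95$ fraction of correct processors before invoking the code. Your Chernoff bound ``across all $n$ processors'' is therefore invalid, and the conclusion it would yield (success probability $1-\exp(-\Omega(n))$, ``well above $0.9$'') is too strong --- indeed, the irreducible constant failure probability of the global event is precisely why the lemma claims only $0.9$ and why Theorem \ref{thm:symmetricprotocol} must majority-vote over three intervals.

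A secondary issue: your step ``choosing $\eta\ll c_p$ \ldots forces $\widehat{s}_i=s$ with probability $1-\varepsilon$ for $\varepsilon$ arbitrarily small'' is backwards. To conclude $\widehat{s}_i=s$ you need the deviation below $c_p/(2\sqrt{n})$, so $\eta$ is capped at $c_p/2$ and the resulting failure bound $\exp(-\Omega(\eta^2))$ is a \emph{fixed} constant determined by $c_p$; shrinking $\eta$ only makes the bound worse. Your derivation of the gap $\theta_{\ell+1}-\theta_\ell\geq c_p/\sqrt{n}$ via a local CLT in place of the paper's citation of Goyal--Kindler--Saks is a fine alternative, and the final decoding step matches the paper, but the proof needs the two-stage (global plus per-processor) decomposition to be correct.
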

\begin{proof}
Define $\widehat{\theta}_s$ to be the fraction of $\beta_i$ transmitted in Step \ref{step:comparison}
that are 1.

We can lower bound $\theta_{\ell+1} - \theta_\ell$ for $x\leq\ell < y$ by $\frac{c}{\sqrt{n}}$
where $c$ is some constant \cite[Lemma 41]{goyal2008lower}. The probability that $|\theta_s-
\widehat{\theta}_s|$ is at most $\frac{c}{8\sqrt{n}}$ can be made at least $0.99$ with an appropriate
choice of the number of repetitions $\gamma$. Similarly, we can ensure that $|\widehat{\theta}_s-
\widehat{\theta}_{s,i}|$ is at most $\frac{c}{8\sqrt{n}}$ with probability at least $0.99$.

By Chernoff bounds, the fraction of processors for which $|\widehat{\theta}_s-\widehat{\theta}_{s,i}|<\frac{c}{8\sqrt{n}}$
is at least $0.95$ with probability at least $1-\exp(-\Omega(n))$. Thus, conditioned on
$|\theta_s-\widehat{\theta}_s|<\frac{c}{8\sqrt{n}}$, we have that for at least $0.95$ of the processors,
$|\theta_s-\widehat{\theta}_{s,i}|<\frac{c}{4\sqrt{n}}$. Further, the string $S_1S_2\ldots S_n$ transmitted
in Step \ref{step:encoding2} with random erasures has distance less than the decoding radius of
$\mathcal{C}$ of $\mathsf{Enc}(s)$ with probability at least $1-\exp(-\Omega(n))$, in which case
all processors can correctly output $s$.

Since the condition $|\theta_s-\widehat{\theta}_s|<\frac{c}{8\sqrt{n}}$ holds with probability at
least $0.99$, the required guarantees of the Lemma hold.
\end{proof}

\begin{proof}[Proof of Theorem \ref{thm:symmetricprotocol}]
The processors run Algorithm \ref{algo:determineinterval} to obtain 3 candidate intervals $I_1,I_2$
and $I_3$, and with exponentially high probability, at least two of these candidate intervals contain the Hamming
weight. The processors run Algorithm \ref{algo:pinpointweight} on each of the three intervals
and processor $i$ obtains outputs $n_0,n_1$ and $n_2$ respectively. With constant probability, at
least two of $n_0,n_1$ and $n_2$ are the same and equal to the correct Hamming weight, and hence outputting
the majority of the three matches the guarantee.
\end{proof}

\end{document}